\documentclass[reprint,
superscriptaddress,
longbibliography,
showpacs,
preprintnumbers,
nofootinbib,
nobibnotes,
amsmath,
amssymb, 
aps,
prr]{revtex4-1}
\usepackage{blindtext}
\usepackage[T1]{fontenc} 
\usepackage[utf8]{inputenc}
\usepackage[colorlinks=true,linkcolor=blue]{hyperref}%
\usepackage{graphicx}
\usepackage{url}
\usepackage{bm}
\usepackage{booktabs}
\usepackage{lineno}
\usepackage{ulem}
\usepackage{listings}
\usepackage[dvipsnames]{xcolor}
\usepackage[caption=false]{subfig}
\usepackage{tabularx}
\usepackage{lmodern}
\usepackage{gensymb}
\usepackage{fontawesome}
\usepackage{physics}
\usepackage{color}
\usepackage{float}
\usepackage{amsmath}
\usepackage{amsfonts}
\usepackage{amssymb}
\usepackage{relsize}
\usepackage{lettrine}
\usepackage{siunitx}
\usepackage[capitalise]{cleveref}
\usepackage{bbold}
\usepackage{appendix}
\usepackage{etoolbox}
\usepackage{pythonhighlight}
\usepackage{lipsum}

\usepackage{amsthm}
\theoremstyle{definition}
\newtheorem{definition}{Definition}[section]
\newtheorem{proposition}{Proposition}

\input{Zallman.fd}
\begin{document}

\title{Single-shot GHZ characterization with connectivity-aware fanout constructions}

\author{Giancarlo~Gatti}
\email{gatti.gianc@gmail.com}
\affiliation{Basic Sciences Department, Faculty of Engineering, Mondragon Unibertsitatea, Olagorta Kalea 26, 48014 Bilbao, Spain}

\begin{abstract}

We propose a practical recipe to transform any depth-$L$ block of CNOTs that prepares $n$-qubit GHZ states into an $n$-qubit fanout gate (multitarget-CNOT) of depth $2L-1$, without the need for ancilla qubits. Considering known logarithmic-depth circuits to prepare GHZ-states, this allows us to construct an $n$-qubit fanout gate with depth $2\log_2(n)-1$, reproducing previous ancillaless constructions. We employ our recipe to construct $n$-qubit fanout gates under heavy-hex connectivity restrictions, obtaining a depth of $O(n^{1/2})$, again reproducing previous complexity theory constructions. Using this recipe on the \textit{ibm\_fez} architecture yields a $156$-qubit fanout construction with depth $33$. Additionally, we show how to employ these $n$-qubit fanout constructions to measure complete sets of commuting observables from the $n$-body Pauli group with the same depth, allowing for efficient single-shot characterization of any GHZ-like state in a given known basis, e.g. fully characterizing a single copy of a $156$-qubit GHZ state using circuit depth $33$ in $\textit{ibm\_fez}$ (its preparation requires an additional depth of $17$).

 
\end{abstract}

\maketitle

\section{Introduction}
\label{sec:intro}

Fanout gates of $n$ qubits are equivalent to performing CNOT from one control qubit to $n-1$ target qubits. They are powerful multi-qubit gates which can be used to approximate with polynomial error several many-qubit gates, such as \textit{parity}, \textit{mod[q]}, \textit{And}, \textit{Or}\cite{hoyer2005quantum}. Constant depth fanout gates can be constructed using CNOTs, single-qubit gates and mid-circuit measurements with feedforward \cite{baumer2025measurement}. Logarithmic depth fanout gates of $n$-qubits can be constructed with CNOTs and full connectivity \cite{proctor2016ancillas}. Depth $O(n^{1/k})$ fanout gates of $n$ qubits can be constructed with CNOTs and $k$-dimensional grid connectivities \cite{rosenbaum2012optimal}. In this paper, we provide a practical recipe for connectivity-aware fanout constructions, proving its generality and showing it on the \textit{ibm\_fez} architecture for a $156$-qubit fanout gate in depth $33$. In particular, this allows to single-shot measure $156$-body Pauli groups with that same depth.
\section{Definitions}
\label{sec:definitions}

\begin{definition}[$n$-body Pauli observables]
These are observables of the form $\{X,Y,Z\}^{\otimes n}$, where $X = \bigl( \begin{smallmatrix}0 & 1\\ 1 & 0\end{smallmatrix}\bigr)$, $Y = \bigl( \begin{smallmatrix}0 & -i\\ i & 0\end{smallmatrix}\bigr)$ and $Z = \bigl( \begin{smallmatrix}1 & 0\\ 0 & -1\end{smallmatrix}\bigr)$. They have two eigenvalues, $+1$ and $-1$, and thus are degenerate with only two possible measurement outcomes, in contrast to fully-collapsing observables which have $2^n$ possible outcomes. To index and rotate between Pauli observables, we define:
\begin{equation}
\label{eq:pauli_to_pauli}
    \Gamma_0 = \mathbb{1}\,\, , \,\,\Gamma_1=H\,Z\,S\,\, , \,\,\Gamma_2=S\,H,
\end{equation}
\noindent where $H = \tfrac{1}{\sqrt{2}}\bigl( \begin{smallmatrix}1 & 1\\ 1 & -1\end{smallmatrix}\bigr)$ is the Hadamard gate and, $S = \bigl( \begin{smallmatrix}1 & 0\\ 0 & i\end{smallmatrix}\bigr)$ is the Phase gate. The $\Gamma$ rotations can transform Pauli observables in a clockwise or counterclockwise manner:

\begin{equation}
    \begin{array}{rcl}
        \Gamma_0 X \,\Gamma^\dagger_0 = X&\;\;\;\;\Gamma_1 X \,\Gamma^\dagger_1 = Y&\;\;\;\;\Gamma_2 X \,\Gamma^\dagger_2=Z \\
        \Gamma_0 Y \,\Gamma^\dagger_0 = Y&\;\;\;\;\Gamma_1 Y \,\Gamma^\dagger_1 = Z&\;\;\;\;\Gamma_2 Y \,\Gamma^\dagger_2=X \\
        \Gamma_0 Z \,\Gamma^\dagger_0 = Z&\;\;\;\;\Gamma_1 Z \,\Gamma^\dagger_1 = X&\;\;\;\;\Gamma_2 Z \,\Gamma^\dagger_2=Y \text{,}
    \end{array}
\end{equation}
\noindent and we can define a general $\Gamma$ rotation for $n$-body observables:
\begin{equation}
\label{eq:general_gamma}
    {\boldsymbol{\Gamma}}_{\beta} = \bigotimes_{i=1}^n \Gamma_{\beta_i}=\Gamma_{\beta_1}\otimes \Gamma_{\beta_2}\otimes\text{...}\otimes \Gamma_{\beta_n},
\end{equation}
\noindent where $\beta_i=0,1,2$ are single-trit parameters which are composed into $\beta=\overline{\beta_1 \beta_2 \text{...} \beta_n}$, an integer between $0$ and $3^n-1$.

This way, we can identify the $n$-body Pauli Observables with indexes between $0$ and $3^n-1$:
\begin{equation}
\label{eq:general_CSCO}
    \mathcal{O}_\beta={\boldsymbol{\Gamma}}_{\beta}\,Z^{\otimes n}\,{\boldsymbol{\Gamma}}_{\beta}^\dagger,
\end{equation}
\noindent such that $\mathcal{O}_0=Z^{\otimes n}$, $\mathcal{O}_1=Z^{\otimes (n-1)}\otimes X$, $\mathcal{O}_2=Z^{\otimes (n-1)}\otimes Y$, $\mathcal{O}_3=Z^{\otimes (n-2)}\otimes X\otimes Z$, and $\mathcal{O}_{3^n-1}=Y^{\otimes n}$.

\end{definition}

\begin{definition}[$n$-body Pauli contexts]
We define a context as a complete set of commuting observables, and an $n$-body Pauli context as a complete set of commuting $n$-body Pauli observables. We focus on the largest contexts for even $n$, which contain $2^{n-1}+1$ observables and correspond to GHZ-like bases. These contexts are GHZ-class stabilizer groups up to global phase. \cite{Gatti:2021aou,nielsen2010quantum}. This way, the GHZ-like states of these bases form a group under local Clifford rotations, and the $n$-body Pauli contexts themselves also form a group under local Clifford rotations.

In the following, we explicitly show a construction for this class of contexts. To do this, we employ the following two basic contexts, as done in \cite{Gatti:2024thesis,lazar2024new}:
\begin{equation}
        \mathcal{C}^{s}_0 = \{Z^{\otimes n}\}\cup\{X,Y\}^{\otimes n}_{\text{mod}_2(\#X)=s}\text{ ,}
\end{equation}
\noindent where $s=0,1$, and where $\#X$ denotes the number of Pauli $X$ operators in the tensor product. Note that the intersection between $\mathcal{C}^{0}_0$ and $\mathcal{C}^{1}_0$ is a single element: $Z^{\otimes n}$. We define the remaining contexts via the $\Gamma$ rotations:
\begin{equation}
\label{eq:general_CSCO}
    \mathcal{C}^{s}_{\beta}={\boldsymbol{\Gamma}}_{\beta}\,\mathcal{C}^{s}_{0}\,{\boldsymbol{\Gamma}}_{\beta}^\dagger,
\end{equation}
\noindent where the basis transformation given by ${\boldsymbol{\Gamma}}_{\beta}$ acts on each element of $\mathcal{C}^{s}_{0}$. For $n \ge 4$, all $\mathcal{C}^{s}_{\beta}$ are unique~\cite{Gatti:2021aou}.

\end{definition}

\begin{definition}[$n$-qubit GHZ-class states]
In a similar manner, we now define the set of GHZ-like states, which form a group under local Clifford rotations. An $n$-qubit GHZ (Greenberger-Horne-Zeilinger) state is a maximally entangled state of the form
\begin{equation}
\ket{\text{GHZ}_0}=\tfrac{1}{\sqrt{2}}(\ket{0}^{\otimes n}+\ket{1}^{\otimes n})\text{,}
\end{equation}
\noindent which is an eigenstate of the $\mathcal{C}^{0}_{0}$ context of $n$ qubits. To obtain the rest of eigenstates of $\mathcal{C}^{0}_{0}$ and of all $\mathcal{C}_\beta^s$ in general, we apply three types of Clifford local rotations on $\ket{\text{GHZ}_0}$, as done in \cite{Gatti:2024thesis,lazar2024new}: 
\begin{enumerate}
\item We consider a second type of GHZ state, with a phase of $i$, 
\begin{equation}
\ket{\text{GHZ}_1}=\tfrac{1}{\sqrt{2}}(\ket{0}^{\otimes n}+i\ket{1}^{\otimes n})\text{,}
\end{equation}
\noindent which is an eigenstate of the $\mathcal{C}^{1}_{0}$ context of $n$ qubits. We merge both definitions using the phase gate $S$ and a single-bit parameter $s=0,1$ used as exponent to turn it on or off:
\begin{equation}
\ket{\text{GHZ}_s}=(S^s\otimes \mathbb{1}^{\otimes (n-1)})\ket{\text{GHZ}_0}\text{,}
\end{equation}
\noindent This way, $\ket{\text{GHZ}_s}$ is an eigenstate of $\mathcal{C}^{s}_{0}$.

\item We consider all eigenstates of $\mathcal{C}^{s}_{0}$ for $s=0,1$:
\begin{equation}
\label{eq:eigenstates_beta0}
\ket{\phi_{\alpha},\mathcal{C}^{s}_{0}}=Z^{\alpha_1}\otimes\Big(\bigotimes_{i=2}^{n} {X^{\alpha_i}}\Big) \ket{\text{GHZ}_s},
\end{equation}

\noindent where $\alpha_i=0,1$ are single-bit parameters composed into ${\alpha}=\overline{\alpha_1 \, \alpha_2 \, \text{...} \, \alpha_{n}}$, an integer between $0$ and $2^n~-~1$. For each fixed value of $s$, the $\ket{\phi_{\alpha},\mathcal{C}^{s}_{0}}$ states form a basis of $2^n$ eigenstates of $\mathcal{C}^{s}_{0}$.
\item We consider GHZ states in bases other than the computational basis $\{\ket{0},\ket{1}\}$. Specifically, in the Pauli $X$, $Y$ and $Z$ bases: $\{\ket{+},\ket{-}\}$, $\{\ket{L},\ket{R}\}$ and $\{\ket{0},\ket{1}\}$, respectively, where $\ket{+}=\tfrac{1}{\sqrt{2}}(\ket{0}+\ket{1})$, $\ket{-}=\tfrac{1}{\sqrt{2}}(\ket{0}-\ket{1})$, $\ket{L}=\tfrac{1}{\sqrt{2}}(\ket{0}+i \ket{1})$ and $\ket{R}=\tfrac{-1}{\sqrt{2}}(i\ket{0}+ \ket{1})$. These basis changes can be applied on any of the $n$-qubits, so that every $\ket{\phi_{\alpha},\mathcal{C}^{s}_{0}}$ state can be rotated in $3^n$ ways (counting itself) via the $\Gamma_\beta$ rotations defined in Eq.~\eqref{eq:general_gamma}, becoming an eigenstate of context $\ket{\phi_{\alpha},\mathcal{C}^{s}_{\beta}}$. Like before, $\beta=\overline{\beta_1 \beta_2 \text{...} \beta_n}=0,1,\text{...},3^n-1$ with $\beta_i=0,1,2$. Thus, we have 

\begin{equation}
\label{eq:eigenstates_beta0}
\ket{\phi_{\alpha},\mathcal{C}^{s}_{\beta}}=\mathbf{\Gamma}_\beta \ket{\phi_{\alpha},\mathcal{C}^{s}_{0}},
\end{equation}
\end{enumerate}

With all three types of local rotations, our class of GHZ states is given by:
\begin{equation}
\label{eq:eigenstates_beta0}
\ket{\phi_{\alpha},\mathcal{C}^{s}_{\beta}}=\Big(\mathbf{\Gamma}_\beta\Big) \Big((Z^{\alpha_1} S^s)\otimes\bigotimes_{i=2}^{n} {X^{\alpha_i}}\Big)\ket{\text{GHZ}_0}\text{,}
\end{equation}
\noindent which consists of single-qubit Clifford gates applied on a regular GHZ state.

\end{definition}
\section{Generalized fan-out equivalence}
In the context of quantum gates, the fanout gate is defined as the application of CNOT from a single control qubit into multiple target qubits
\begin{equation}
CX_n\equiv\text{CX}({c\rightarrow t_1,t_2,\text{...},t_{n-1}})=\prod_{i=1}^{n-1}\text{CX}(c,t_i)\text{ .}
\end{equation}

A fanout gate can be used to prepare an $n$ qubit GHZ state from a separable state
\begin{equation}
\ket{\psi_0}=\ket{+}_{c}\ket{0}^{\otimes (n-1)}\overset{CX_n}{\longrightarrow}\ket{\psi_{\text{GHZ}}}=\tfrac{1}{\sqrt{2}}(\ket{0}^{\otimes n}+\ket{1}^{\otimes n}),
\end{equation}
but not all gates  
maps $\ket{+}\ket{0}^{\otimes (n-1)}$ into an $n$ qubit GHZ state
As shown in Ref.~\cite{green2001counting}, a transformation that is capable of preparing a GHZ state from $\ket{+}\ket{0}^{\otimes (n-1)}$ is not the same as a fan-out gate, b a fan-out gate can achieve this. Indeed, if a fan-out gate could be prepared with constant depth, then so would a GHZ-state preparation. Ref.~\cite{fang2003quantum} proves a lower bound of depth $O(\log(n))$ for the task of computing an $n$-qubit parity for any constant number of ancilla. This guarantees that the fan-out gate cannot be implemented with lower depth under the same conditions, considering that in principle it could be used to output the parity without additional depth. However, this does not prove that a quantum fan-out gate can be actually implemented with logarithmic depth. In fact, Ref.~\cite{baumer2024measurement} speculates that depth $O(n)$ is required to construct a fan-out gate with no ancillas, even in the case of full-connectivity. Constant depth implementations have been realized \cite{song2024realization,baumer2024measurement}, but they require $O(n)$ ancilla qubits and mid-circuit measurements, which is particularly prohibitive in current hardware. To the best of our knowledge, a unitary implementation of a quantum fan-out gate with no ancillas and depth lower than $O(n)$ still requires to be addressed. 

The fan-out gate satisfies the following equivalence \cite{hoyer2005quantum,baumer2024measurement},
\begin{multline}
\text{CX}(0\rightarrow 1, \text{...},{n-1})=\text{CX}(n-2,n-1)\text{CX}(n-3,n-2)\text{...}\\
\text{CX}(1,2)\text{CX}(0,1)\text{CX}(1,2)\text{...}\text{CX}(n-3,n-2)\text{CX}(n-2,n-1)
\label{fan_out_equivalence}
\end{multline}
\noindent which is also illustrated in Fig.~\ref{fig:fan_out}. In this section we show a generalized version of the fan-out equivalence, allowing us to construct a quantum fan-out gate with twice the depth required in a GHZ-state preparation block. Specifically, our fan-out equivalence generalization transforms a depth-$L$ block of GHZ-constructing CNOTs into a depth-($2L-1$) fan-out gate, that is, as low as depth $2 \log_2(n)-1$ in fully-connected architectures.

\begin{figure}[ht!]
\centering
    \includegraphics[width=0.5\textwidth]{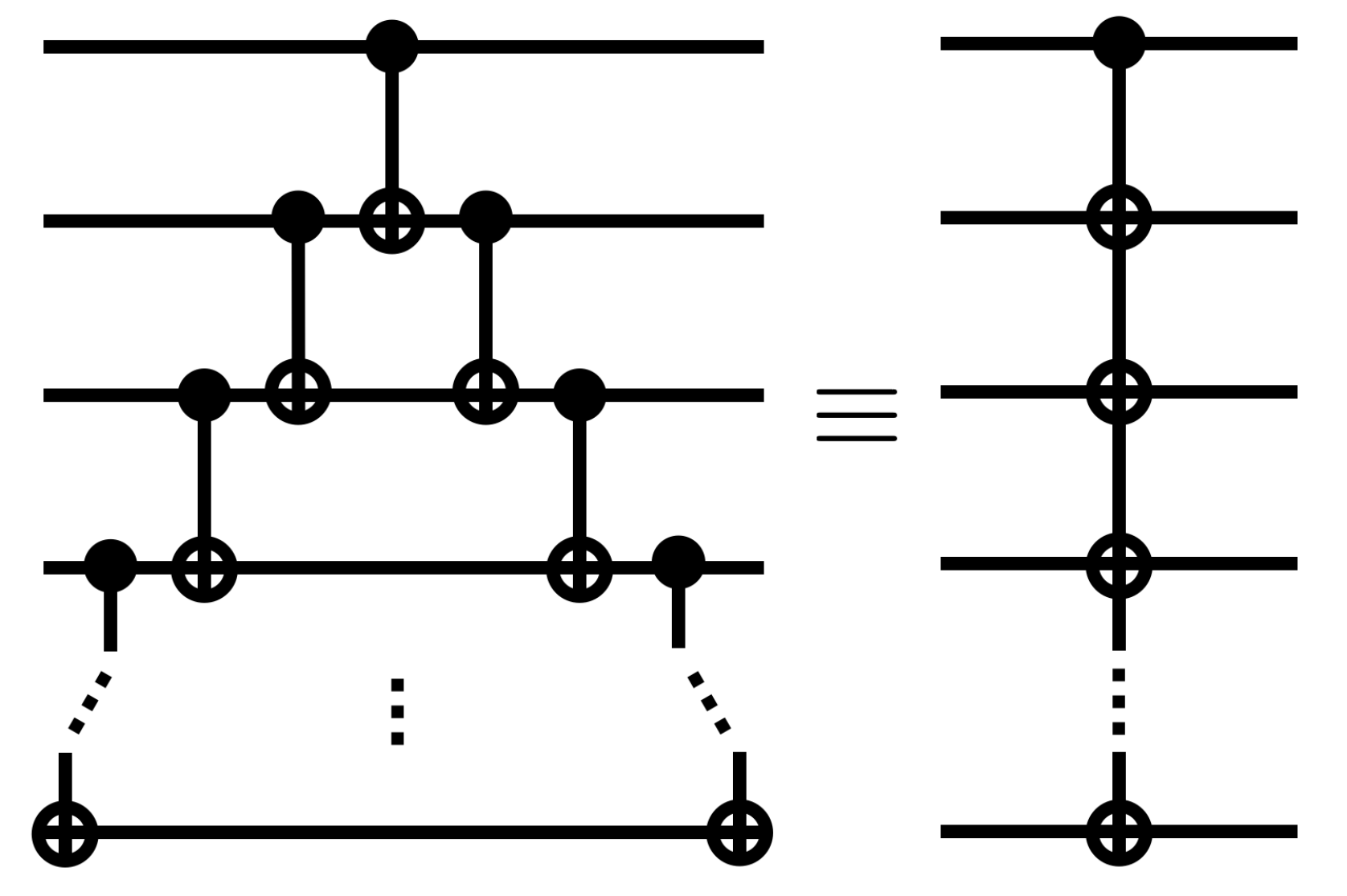}
\caption{\textbf{\textit{Fan-out gate equivalence.}}
A multi-target CNOT, or fan-out gate can be constructed by concatenating two CNOT ladders, thus having an ``open'' and ``closed'' equivalence.
}
\label{fig:fan_out}
\end{figure}

\begin{proposition}
Consider a list $\mathcal{U}_\text{GHZ}$ of $\text{CX}(i,j)$ gates such that their application from left to right maps the $\ket{+}_h\ket{0}_{t_1}\ket{0}_{t_2}\text{...}\ket{0}_{t_{n-1}}$ state into the $n$-qubit GHZ state $\ket{\text{GHZ}_0}=\tfrac{1}{\sqrt{2}}(\ket{0}^{\otimes n}+\ket{1}^{\otimes n})$. Then, the list of CNOTs defined by the concatenation
\begin{equation}
\mathcal{U}_\text{FO}=\text{SEVER}_h\big(\mathcal{U}_\text{GHZ}^\dagger\big)+\mathcal{U}_\text{GHZ}
\label{U_FO}
\end{equation}
\noindent is equivalent under application to a fan-out gate $\text{CX}(h\rightarrow t_1,t_2,\text{...},t_{n-1})$, where $\text{SEVER}_h(\mathcal{U})$ is a function that removes from the list $\mathcal{U}$ all elements with qubit $h$ as control or target, and $\mathcal{U}^\dagger$ reverses the order of list $\mathcal{U}$ and conjugate-transposes each element.
\end{proposition}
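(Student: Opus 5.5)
We work throughout in the linear (mod-2) picture of CNOT circuits. Every CNOT circuit acts on computational basis states $\ket{x}$, $x\in\mathbb{F}_2^n$, as an invertible linear map $x\mapsto Mx$ over $\mathbb{F}_2$. Two CNOT lists are therefore equivalent under application exactly when their matrices coincide. Since $\text{CX}(i,j)^\dagger=\text{CX}(i,j)$, the list $\mathcal{U}_\text{GHZ}^\dagger$ is just the reversed list, and it implements $M_\text{GHZ}^{-1}$.

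\textbf{Step 1: characterize $M_\text{GHZ}$.} The input $\ket{+}_h\ket{0}\cdots\ket{0}$ is the superposition of $\ket{0}$ and $\ket{e_h}$. The circuit fixes $\ket{0}$, so the GHZ condition forces $M_\text{GHZ}e_h=\mathbf{1}$, the all-ones vector. Nothing is required of the other columns.

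\textbf{Step 2: analyze the severed inverse.} Let $N$ be the matrix of $\text{SEVER}_h(\mathcal{U}_\text{GHZ}^\dagger)$. Its gates never touch $h$, so $N e_h=e_h$ and $N$ acts as identity on the $h$ coordinate. The key claim is that, for every $j\neq h$,
\[
M_\text{GHZ} N e_j = e_j + c_j\,\mathbf{1} \quad\text{or more simply}\quad M_\text{GHZ}Ne_j=e_j .
\]
The fanout matrix $F$ sends $e_h\mapsto \mathbf{1}$ and fixes $e_j$ for $j\neq h$. So we need $M_\text{GHZ}N=F$, which reduces to $M_\text{GHZ}Ne_j=e_j$. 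The $e_h$ column is already correct, since $M_\text{GHZ}Ne_h=M_\text{GHZ}e_h=\mathbf{1}$.

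\textbf{Step 3: compare $N$ with $M_\text{GHZ}^{-1}$ (the main obstacle).} It would suffice to show $Ne_j=M_\text{GHZ}^{-1}e_j$ for all $j\neq h$, that is, that removing the $h$-gates does not change how the reversed circuit acts on inputs with $x_h=0$. This is not automatic, for two reasons:
\begin{itemize}
\item gates $\text{CX}(i,h)$ can write into $h$, and later gates $\text{CX}(h,k)$ can then propagate that value;
\item gates $\text{CX}(h,k)$ can flip targets only when $x_h=1$.
\end{itemize}
The plan is an induction on the gate list, tracking which qubits carry the value of $h$ in the forward GHZ circuit.

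Consider the forward circuit on $\ket{+}_h\ket{0}\cdots$, tracked classically on $e_h$. At each time the state vector is the indicator of a set $A_t$ of qubits ``holding the GHZ bit''. The final set is $A=\{1,\dots,n\}$. The forward matrix may also contain $\text{CX}(i,h)$ gates, possibly making $h$ a non-holder temporarily. If needed, first prove a normalization lemma: a GHZ-preparing list can be assumed to contain no gate targeting $h$ and no gate $\text{CX}(i,j)$ with $i\notin A_t$. Otherwise, handle such gates directly in the induction. Then prove by backward induction on prefixes the invariant
\[
\text{(suffix of }\mathcal{U}_\text{GHZ}\text{)}\circ\text{SEVER}_h(\text{that suffix})^{-1}
\]
acts as ``add $x_h$ to all holders in $A_t$'', with identity on the other basis vectors. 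The base case is the empty suffix, where $A=\{h\}$ and the map is the identity. For the inductive step, prepend one gate $g$ and check three cases:
\begin{itemize}
\item $g$ touches $h$: it is absent from the severed part, and $\text{CX}(h,k)$ just adds $k$ to the holder set;
\item $g=\text{CX}(i,j)$ with $i,j\neq h$ and $i$ a holder: conjugating adds $j$ to the holder set;
\item neither qubit is a holder: the gate commutes through.
\end{itemize}

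At the end the holder set is all $n$ qubits, which gives exactly $F$. I expect the bookkeeping in the inductive step to be the main difficulty, specifically justifying that conjugating ``add $x_h$ to the set $A$'' by $\text{CX}(i,j)$ yields ``add $x_h$ to $A\triangle\{j\}$'' when $i\in A$. This is just the identity $\text{CX}(i,j)X_A\text{CX}(i,j)=X_{A\triangle\{j\}}$ when $i\in A$, applied to the $x_h$-controlled product of $X$'s. The depth count $2L-1$ then follows because at least the first GHZ layer must act on $h$, so it disappears after severing.
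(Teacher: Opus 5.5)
\textbf{Gap: gates that write into $h$.} Your plan offers two options for gates $\text{CX}(i,h)$: a normalization lemma, or handling them directly in the induction. Neither can work, because the proposition is false for such lists.

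Take $n=2$ and $\mathcal{U}_\text{GHZ}=[\text{CX}(t,h),\text{CX}(h,t)]$. On $\ket{+}_h\ket{0}_t$ the first gate does nothing and the second produces the Bell state, so the hypothesis holds. Both gates touch $h$, so $\text{SEVER}_h(\mathcal{U}_\text{GHZ}^\dagger)$ is empty and $\mathcal{U}_\text{FO}=\mathcal{U}_\text{GHZ}$. This sends $\ket{0}_h\ket{1}_t\mapsto\ket{1}_h\ket{0}_t$, which is not $\text{CX}(h\rightarrow t)$.

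The normalization lemma cannot be invoked without loss of generality either. $\mathcal{U}_\text{FO}$ is built from the particular list, so replacing that list by a normalized one changes the very circuit whose equivalence is claimed. The statement therefore has to be restricted, for example to lists in which no gate targets $h$. The paper's proof makes this restriction tacitly: its paths and branchings assume a tree-like circuit in which every gate not commuting with a path has its control on that path.

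\textbf{With that hypothesis, your argument closes.} It then follows a genuinely different route from the paper, which rewrites $\mathcal{U}_\text{FO}$ by repeatedly closing V-shaped CNOT ladders along each path with the fan-out identity and absorbing branchings one at a time.

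In fact your Step 3 becomes one line. On an input with $x_h=0$, nothing ever writes into $h$, so every $\text{CX}(h,k)$ in the reversed list acts trivially and deleting them changes nothing. Hence $Ne_j=M_\text{GHZ}^{-1}e_j$, so $M_\text{GHZ}Ne_j=e_j$ for $j\neq h$. Together with your Steps 1 and 2 this gives $M_\text{GHZ}N=F$ for every GHZ-preparing list that never targets $h$. No tree structure and no normalization of gates with non-holder controls is needed, so this is more general than the paper's argument.

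If you keep the holder-set induction instead, fix three things:
\begin{enumerate}
\item Run it over time-ordered prefixes $g_1,\dots,g_t$. Appending $g_{t+1}$ wraps it around the outside of the concatenation, giving $M_{t+1}=G_{t+1}M_t\tilde G_{t+1}$, where $\tilde G_{t+1}$ is the identity if $g_{t+1}$ touches $h$. Prepending to a suffix would instead insert the gate in the middle, and the conjugation picture breaks.
\item State the invariant as adding $x_h$ to the holders other than $h$.
\item Include the case $i\notin A$, $j\in A$; the same conjugation identity leaves the set unchanged there.
\end{enumerate}
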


\begin{proof}
If $\mathcal{U}_\text{GHZ}$ maps
\begin{equation}
\ket{+}_h\ket{0}_{t_1}\ket{0}_{t_2}\text{...}\ket{0}_{t_{n-1}} \rightarrow \ket{\text{GHZ}_0}=\tfrac{1}{\sqrt{2}}(\ket{0}^{\otimes n}+\ket{1}^{\otimes n})\text{,}
\end{equation}
\noindent then for each qubit $i\neq h$, there exists a subsequence of $\mathcal{U}_\text{GHZ}$ with the form
\begin{multline}
\mathcal{P}^{(i)}=\Big[\text{CX}(h,n^{(i)}_1),\text{CX}(n^{(i)}_1,n^{(i)}_2),\text{CX}(n^{(i)}_2,n^{(i)}_3),\text{ ... },\\
\text{CX}(n^{(i)}_{L^{(i)}-1},n^{(i)}_{L^{(i)}})\Big]\text{,}
\end{multline}
\noindent for some value of $L^{(i)}$ and where $n^{(i)}_{L^{(i)}}=i$. Being a subsequence means that the elements of $\mathcal{P}^{(i)}$ appear in $\mathcal{U}_\text{GHZ}$ with the same application order, but in general with other nonconmuting CNOTs in between, that is, $\mathcal{P}^{(i)}$ is not necessarily a sublist of $\mathcal{U}_\text{GHZ}$. We refer to $\mathcal{P}^{(i)}$ as path $i$ or path to $i$, and define $(\mathcal{P}^{(i)})^c$ as its complement subsequence in $\mathcal{U}_\text{GHZ}$.

We then define $\mathcal{B}^{(i)}$ as the subsequence of $(\mathcal{P}^{(i)})^c$ such that each element of $\mathcal{B}^{(i)}$ is noncommuting with at least one element of $\mathcal{P}^{(i)}$. We call this subsequence the branchings of path $i$. If path $i$ has $b$ branchings located in path steps $\ell_1,\ell_2,\text{...},\ell_{b}$, we have
\begin{equation}
\mathcal{B}^{(i)}=\Big[\text{CX}(n^{(i)}_{\ell_1},x_1),\text{CX}(n^{(i)}_{\ell_2},x_2),\text{...},\text{CX}(n^{(i)}_{\ell_{b}},x_{b})\Big]
\end{equation}
\noindent where $x_j$ are the target qubits of the branchings, such that $x_j$ are not in any control or target of $\mathcal{P}^{(i)}$, but instead define new paths to other qubits.

In any $\mathcal{U}_\text{GHZ}$ list without identical CNOTs in consecutive layers, the subsequence $\mathcal{B}^{(i)}$ corresponds to either CNOT ramifications from path $i$ (with control qubit in path $i$), or a prolongation of path $i$ after reaching qubit $i$ (with control in qubit $i$). This can be visualized in Fig.~\ref{fig:ibm_fez_ghz_layers}, where there is a unique path between qubit $h=89$ and qubit $i=92$, $\mathcal{P}^{(92)}=[\text{CX}(89,90),\text{CX}(90,91),\text{CX}(91,92)]$, and the complementary subsequence of noncommuting CNOTs is $\mathcal{B}^{(92)}=[\text{CX}(91,98),\text{CX}(92,93)]$, which are respectively a ramification and a prolongation of path $92$.

Let us now consider the concatenation $\mathcal{U}_\text{FO}$ defined in Eq.~\eqref{U_FO}. Since there are only commuting CNOTs in $\mathcal{U}_\text{GHZ}$ before $\text{CX}(n^{(i)}_{\ell_1},n^{(i)}_{\ell_1+1})$, and there are only commuting CNOTs in $\text{SEVER}_h\big(\mathcal{U}_\text{GHZ}^\dagger\big)$ after $\text{CX}(n^{(i)}_{\ell_1},n^{(i)}_{\ell_1+1})$, we can reorder $\mathcal{U}_\text{FO}$ such that the following sublist (with nothing in-between) appears between those two elements:
\begin{multline}
\mathcal{F}^{(i)}_1=\Big[\text{CX}(n^{(i)}_{\ell_1-1},n^{(i)}_{\ell_1}),\text{CX}(n^{(i)}_{\ell_1-2},n^{(i)}_{\ell_1-1}),\text{ ...},\\
\text{CX}(n^{(i)}_{1},n^{(i)}_{2}),\text{CX}(h,n^{(i)}_{1}),\text{CX}(n^{(i)}_{1},n^{(i)}_{2}), \text{ ...}\\
\text{CX}(n^{(i)}_{\ell_1-2},n^{(i)}_{\ell_1-1}),\text{CX}(n^{(i)}_{\ell_1-1},n^{(i)}_{\ell_1})\Big]
\end{multline}
\noindent with the rest of $\mathcal{U}_\text{GHZ}$ in the right, and the rest of $\text{SEVER}_h\big(\mathcal{U}_\text{GHZ}^\dagger\big)$ in the left.

We now perform two simplifications with the fan-out equivalence in Eq.~\eqref{fan_out_equivalence} and in Fig.~\ref{fig:fan_out}:
\begin{enumerate}
    \item We identify a V-shape fan-out in $\mathcal{F}^{(i)}_1$, and ``close'' it, obtaining:
    \begin{equation}
    \mathcal{F}'^{(i)}_1=\Big[\text{CX}(h,n^{(i)}_{1}),\text{CX}(h,n^{(i)}_{2}),\text{ ...},\text{CX}(h,n^{(i)}_{\ell_1})\Big]\text{.}
    \end{equation}
    \noindent And we can now see that $\mathcal{F}'^{(i)}_1$ fully commutes with $\mathcal{B}^{(i)}$, with the exception of $\text{CX}(h,n^{(i)}_{\ell_1})$ with $\text{CX}(n^{(i)}_{\ell_1},x_1)$.
    \item If the branching $\text{CX}(n^{(i)}_{\ell_1},x_1)$ comes before the path continuation $\text{CX}(n^{(i)}_{\ell_1},n^{(i)}_{\ell_1+1})$ in $\mathcal{U}_\text{GHZ}$, we can move around commuting elements so that $\mathcal{U}_\text{FO}$ contains the sublist
    \begin{multline}
    \Big[\text{CX}(n^{(i)}_{\ell_1},n^{(i)}_{\ell_1+1}),\text{CX}(n^{(i)}_{\ell_1},x_1),\text{CX}(h,n^{(i)}_{\ell_1}),\\
    \text{CX}(n^{(i)}_{\ell_1},x_1),\text{CX}(n^{(i)}_{\ell_1},n^{(i)}_{\ell_1+1})\Big]\text{,}
    \end{multline}
    \noindent and if the branching comes after the path continuation, we instead move commuting elements so that $\mathcal{U}_\text{FO}$ contains the sublist
    \begin{multline}
    \text{CX}(n^{(i)}_{\ell_1},x_1),\Big[\text{CX}(n^{(i)}_{\ell_1},n^{(i)}_{\ell_1+1}),\text{CX}(h,n^{(i)}_{\ell_1}),\\
    \text{CX}(n^{(i)}_{\ell_1},n^{(i)}_{\ell_1+1}),\text{CX}(n^{(i)}_{\ell_1},x_1)\Big]\text{.}
    \end{multline}
    \noindent By means of the fan-out equivalence, these sublists are application-equivalent (equivalent under application) to
    \begin{equation}
    \mathcal{F}''^{(i)}_1=\Big[\text{CX}(h,n^{(i)}_{\ell_1}),\text{CX}(h,n^{(i)}_{\ell_1+1}),\text{CX}(h,x_1)\Big]\text{,}
    \end{equation}
    \noindent where all CNOTs commute with each other. This can also be done when there are multiple branchings in the same control qubit. This way, the branching in step $\ell_1$ of path $i$ is no longer a problem, and we can remove $\text{CX}(h,x_1)$ from the subsequence of noncommuting CNOTs to $\mathcal{P}^{(i)}$. Within the new application-equivalent version of $\mathcal{U}_\text{FO}$ we should now have the sublist
    \begin{multline}
    \mathcal{F}^{(i)}_2=\Big[CX(n^{(i)}_{\ell_2-1},n^{(i)}_{\ell_2}),CX(n^{(i)}_{\ell_2-2},n^{(i)}_{\ell_2-1}),\text{...}\\
    CX(n^{(i)}_{\ell_1+1},n^{(i)}_{\ell_1+2}),CX(h,n^{(i)}_{\ell_1+1}),CX(n^{(i)}_{\ell_1+1},n^{(i)}_{\ell_1+2}),\text{...}\\
    CX(n^{(i)}_{\ell_2-2},n^{(i)}_{\ell_2-1}),CX(n^{(i)}_{\ell_2-1},n^{(i)}_{\ell_2})\Big]+\\
    \Big[CX(h,n^{(i)}_{1}),CX(h,n^{(i)}_{2}),\text{...},CX(h,n^{(i)}_{\ell_1})\Big]+\\
    [CX(h,x_1)]
    \label{UFO_2}
    \end{multline}
    \noindent where the second and third concatenation summands are all-commuting with the expression and can freely move around. Note that $\mathcal{F}^{(i)}_2$ is a sublist surrounded by the second branch.
\end{enumerate}
We can then repeat steps 1. and 2. to close fan-outs in the first summand of Eq.~\eqref{UFO_2} and incorporate path $i$ up until the third branch. After iteratively repeating this procedure for all $b$ branches, we obtain the $\mathcal{U}_\text{FO}$ sublist
\begin{multline}
\mathcal{F}^{(i)}_b=\Big[CX(h,n^{(i)}_{1}),CX(h,n^{(i)}_{2}),\text{...},CX(h,n^{(i)}_{L})\Big]+\\
[CX(h,x_1),CX(h,x_2),\text{...},CX(h,x_b)]\text{,}
\end{multline}
\noindent that is, a fan-out gate for all qubits of path-$i$ and its branches. Notably, the new $\mathcal{U}_\text{FO}$ still has the property of having a unique path to every qubit, but the length of the path to any qubit in $\mathcal{P}^{(i)}$ and its branchings has been reduced to $1$. This process can then be repeated for all qubits, such that the path length to any qubit (other than $h$) becomes $1$. This way, $\mathcal{U}_\text{FO}$ is shown to be application-equivalent to
\begin{equation}
\mathcal{U}'_\text{FO}=[CX(h,t_1),CX(h,t_2),\text{...},CX(h,t_{n-1})]\text{,}
\end{equation}
\noindent that is, an $n$-qubit fan-out gate.
\end{proof}

In Fig.~\ref{fig:fan_out_examples} we show example circuits of $n$-qubit fan-out gates with $2\log_2(n)-1$ depth. In addition to the analytic proof above, we have numerically tested them to be identical to the fan-out matrix up to $n=16$.

\begin{figure*}[ht!]
\centering
\includegraphics[width=1.0\textwidth]{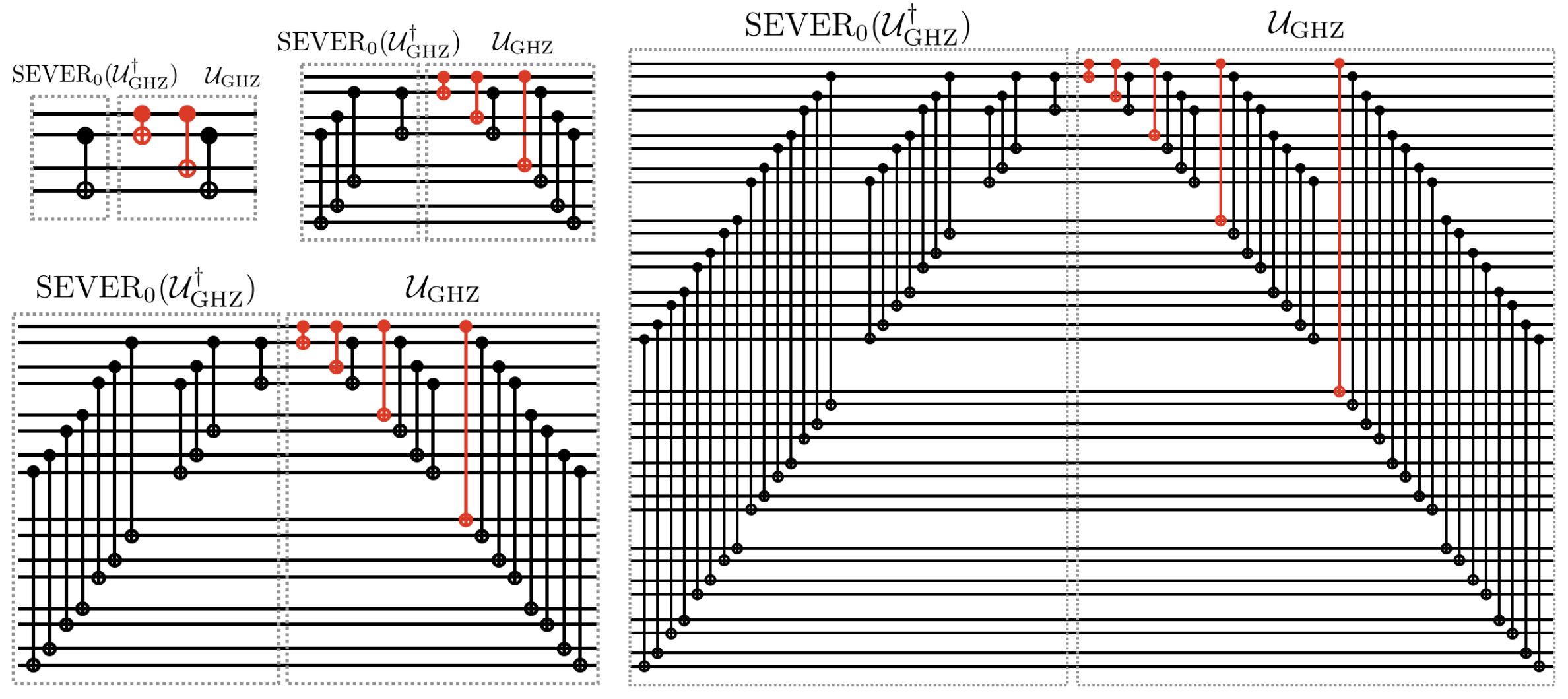}
\caption{\textbf{\textit{Fan-out gate implementation examples.}}
Optimal logarithmic-depth fan-out gate implementations for $n=4,8,16,32$ qubits, with respective depth $D=3,5,7,9$. To produce this construction, we employ a GHZ-generating block of CNOTs $\mathcal{U}_\text{GHZ}$ and its inverse with the CNOTs in red removed.
}
\label{fig:fan_out_examples}
\end{figure*}
\section{GHZ-preparation and context measurement in $156$-qubit heavy-hex topology}
\label{sec:protocol}

To prepare a GHZ-class state, we only need to apply single-qubit gates on a GHZ state, as shown in the previous section. If we want to reduce the noise in this process, we mainly need to optimize the way in which two-qubit gates are applied to prepare the regular GHZ state. In this section, we show how to do this for a $156$-qubit GHZ state in \textit{ibm\_fez}.

A GHZ state of $n$ qubits can be prepared with logarithmic circuit depth \cite{zhang2022quantum}. This can be easily achieved in a $2^k$ qubit quantum architecture with full connectivity, with qubits indexed from $i=0$ to $i=2^k-1$:

\begin{itemize}
\item All qubits are initially in $\ket{0}$. A Hadamard gate is applied on qubit $0$. The system state is: $\ket{\psi_0}=\tfrac{1}{\sqrt{2}}(\ket{0}+\ket{1})\otimes\ket{0}^{\otimes(2^k-1)}$.
\item \textit{Layer $1$:} A CNOT is applied from qubit 0 (control) to qubit 1 (target). We now have a Bell state: $\ket{\psi_1}=\tfrac{1}{\sqrt{2}}(\ket{00}+\ket{11})\otimes\ket{0}^{\otimes(2^k-2)}$.
\item \textit{Layer $2$:} A CNOT is applied from qubit 0 (control) to qubit 2 (target), and a CNOT is applied from qubit 1 (control) to qubit 3 (target). The size of the entangled state has now doubled, and we have a $4$-qubit GHZ state: $\ket{\psi_2}=\tfrac{1}{\sqrt{2}}(\ket{0000}+\ket{1111})\otimes\ket{0}^{\otimes(2^k-4)}$.
\item \textit{Layer $i$:} CNOTs are applied from the $2^{i-1}$ qubits in the GHZ state (controls) to $2^{i-1}$ still-unentangled qubits. The size of the GHZ state doubles again, from $2^{i-1}$ qubits to $2^{i}$ qubits: $\ket{\psi_i}=\tfrac{1}{\sqrt{2}}(\ket{0}^{\otimes 2^i}+\ket{1}^{\otimes 2^i})\otimes\ket{0}^{\otimes(2^k-2^i)}$.
\end{itemize}

The two-qubit layers of this circuit can be grouped in the following block of CNOTs:
\begin{equation}
U_\text{GHZ}=\prod_{i=1}^{\log_2{n}}\prod_{j=1}^{i}\text{CNOT}(j-1,2^{i-1}+j-1)\text{,}
\label{GHZvirusblock}
\end{equation}
\noindent where the products are expressed from right to left, so that the $i$ index corresponds to the layer number. Note that $U_\text{GHZ}$ is different from a multi-target CNOT $C\!X^{(n)}$, but both can construct a GHZ state from the input state $\ket{+}\ket{0}^{\otimes (n-1)}$.

As we can see, if we have access to full connectivity, $k$ layers are sufficient to prepare a GHZ state of $2^k$ qubits. In reality, however, we need to adapt our approach to grow the GHZ state size as fast as possible with the two-qubit connectivity constraints of the actual device. In Fig.~\ref{fig:ibm_fez_ghz_layers} we show the \textit{heavy-hex} connectivity map of \textit{ibm\_fez} and the $17$-layers of two qubit gates used to generate a $156$-qubit GHZ state. Note that we begin by applying the Hadamard gate on qubit $89$, which has $3$ neighbours, but could as easily start with qubit $78$ ($2$ neighbours) instead, since layer $1$ makes a Bell state with them anyway. The size of GHZ-states that can be obtained for various depths is detailed in Table~\ref{table:depth_vs_ghzsize}.

The two-qubit layers of this circuit can be grouped in a block of CNOTs of the following form:
\begin{multline}
U^{\text{hex}}_\text{GHZ}=\Big(\text{CX}(154,155)\;\text{...}\;\text{CX}(4,5)\;\text{CX}(1,0)\Big)\text{...}\\
\Big(\text{CX}(78,69)\;\text{CX}(89,88)\Big)\;\Big(\text{CX}(89,78)\Big)\text{,}
\label{GHZvirusblock}
\end{multline}
\noindent where each parenthesis contains a layer. The rightmost parenthesis is layer $1$, and the leftmost parenthesis is layer $17$. For the sake of space, we have omitted most of the CNOTs and layers in the expression as there are too many, but they can be retrieved from Fig.~\ref{fig:ibm_fez_ghz_layers}.

\begin{table}[h]
    \centering
    \begin{tabular}{|c|c||c|c|}
        \hline
        Depth & GHZ size (qubits) & Depth & GHZ size (qubits) \\
        \hline\hline
        1       & 2 \text{(Bell)}   & 10       & 70 \\
        2       & 4       & 11       & 84       \\
        3       & 7       & 12       & 99       \\
        4       & 11       & 13       & 112       \\
        5       & 16       & 14       & 124       \\
        6       & 23       & 15       & 136       \\
        7       & 32       & 16       & 148       \\
        8       & 44       & 17       & 156       \\
        9       & 56       &        &        \\
        \hline
    \end{tabular}
    \caption{Relation between the number of two-qubit-gate layers in the quantum circuit and the size of GHZ states that can be obtained (Bell state in the case of $2$ qubits).}
    \label{table:depth_vs_ghzsize}
\end{table}

\begin{figure*}[ht!]
\centering
    \includegraphics[width=0.75\textwidth]{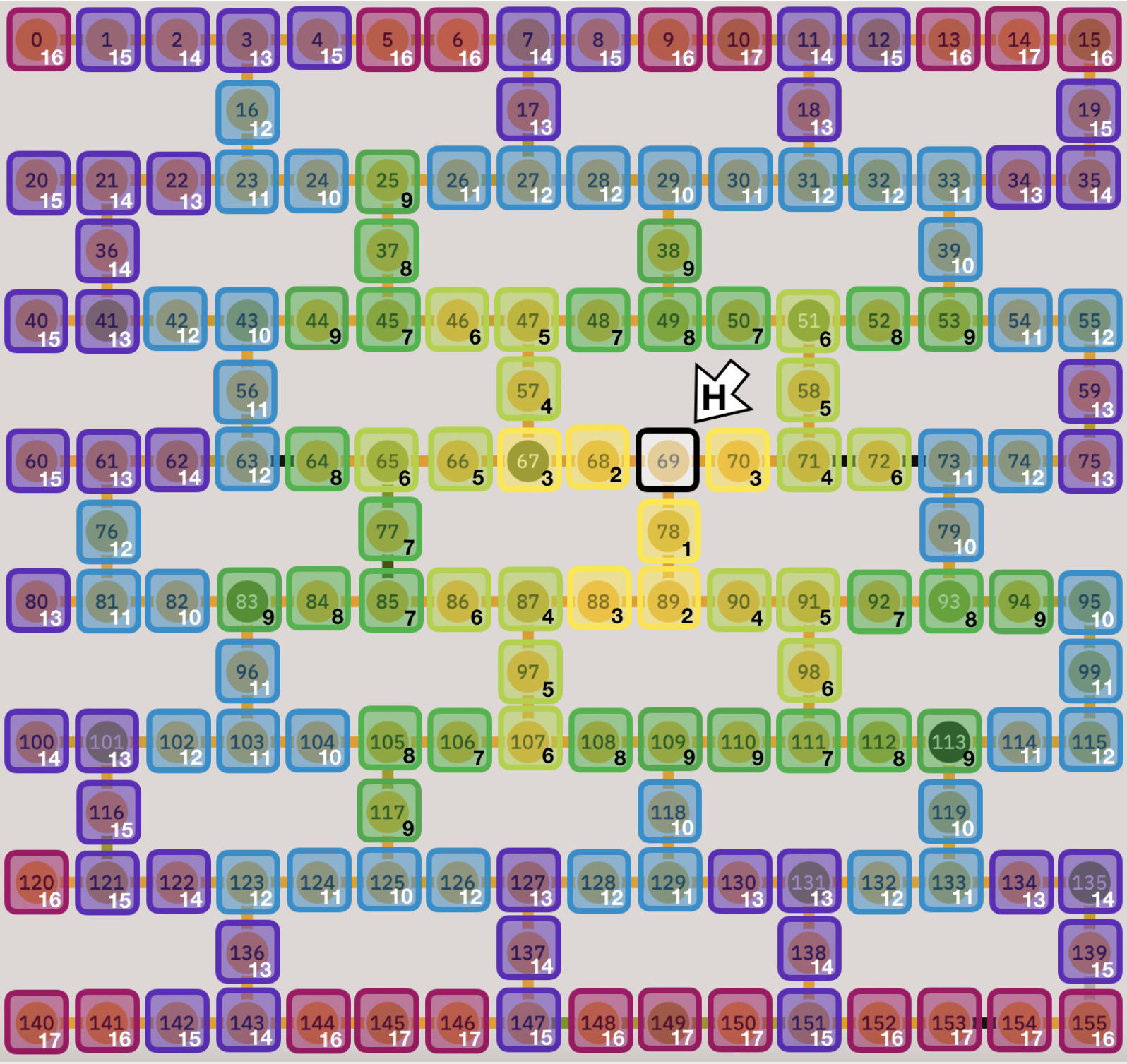}
\caption{\textbf{\textit{Connectivity and layer map for GHZ-state preparation in \textit{ibm\_fez}.}}
To prepare a $156$-qubit GHZ state, a Hadamard gate is applied on qubit number $69$ followed by $155$ CNOTs, which can be grouped into $17$ layers. In this map, the qubits are indexed between  $0$ and $155$, and all qubits excepting qubit $69$ are also accompanied by a number between $1$ and $17$, which indicates the layer in which they are targeted by a CNOT, with one of their entangled neighbours being the control. Additionally, we color-code groups of consecutive layers for easier visualization. 
}
\label{fig:ibm_fez_ghz_layers}
\end{figure*}
\label{sec:conclusion}
\section{Conclusions and Outlook}
We have demonstrated a process to transform any list of GHZ-constructing CNOTs into a fanout gate of twice the depth (minus $1$). Here, $\mathcal{U}_\text{GHZ}$ is understood to map $\ket{+}\ker{0\text{...}0}$ into a GHZ state, which is something that a fanout gate can do, but which does not automatically make $\mathcal{U}_\text{GHZ}$ a fanout gate. Our process is in agreement with previous results proving that fanout could be constructed with logarithmic depth under full-connectivity and with $O(n^{1/k})$ depth for $k$-dimensional connectivity, as it is known that GHZ states can be prepared with these depths under these connectivity restrictions. We illustrate this by showing how to prepare a $156$-qubit GHZ state in \textit{ibm\_fez} (heavy-hex connectivity) using depth $17$. This allows us to construct a $156$-qubit fanout gate with depth $33$, which can be used to fully measure any context in the $n$-body Pauli group with that same depth, allowing single-shot characterization of GHZ-like states.



\section*{Acknowledgments}
We acknowledge support from the Basque Goverment ELKARTEK KUBIBIT project with reference KK-2025/00079.

\bibliography{qencoding.bib}

\pagebreak
\clearpage
\onecolumngrid
\appendix

\ifx \standalonesupplemental\undefined
\setcounter{page}{1}
\setcounter{figure}{0}
\setcounter{table}{0}
\setcounter{equation}{0}
\fi

\renewcommand{\thepage}{Supplemental Methods and Tables -- S\arabic{page}}
\renewcommand{\figurename}{SUPPL. FIG.}
\renewcommand{\tablename}{SUPPL. TABLE}

\AtBeginEnvironment{appendices}{\crefalias{section}{appendix}}



\end{document}